\title{Sequential transmission at short times}
\author[1,2]{Archishna Bhattacharyya \thanks{abhat086@uottawa.ca}}
\affil[1]{Perimeter Institute for Theoretical Physics}
\affil[2]{Department of Mathematics and Statistics, University of Ottawa}
\date{}
\numberwithin{equation}{section}
\numberwithin{figure}{section}
\begin{document}

\maketitle

\vspace{0.2cm}

\begin{abstract}
     We show that it is possible to transmit and preserve information at short time scales over an n-fold composition of quantum channels $(\Xi^n)_{n \in \mbb{N}}$ modelled as a discrete quantum Markov semigroup, long enough to generate entanglement at some finite $n$. This is achieved by interspersing the action of noise with quantum error correction in succession. We show this by means of a non-trivial lower bound on the one-shot quantum capacity in the sequential setting as a function of $n$, in an attempt to model a linear quantum network and assess its capabilities to distribute entanglement. Intriguingly, the rate of transmission of such a network turns out to be a property of the spectrum of the channels composed in sequence, and the maximum possible error in transmission can be bounded as a function of the noise model only. As an application, we derive an exact error bound for the infinite dimensional pure-loss channel believed to be the dominant source of noise in networks precluding the distribution of entanglement. We exemplify our results by analysing the amplitude damping channel and its bosonic counterpart.
\end{abstract}

\vspace{1cm}

\tableofcontents

\newpage

\section{Introduction}

The possibility of a quantum internet \cite{kimble2008quantum} is intriguing, and as such warrants an investigation of the capabilities of quantum networks to distribute entanglement. The simplest quantum network may be envisioned as a sequence of quantum channels (i.e., linear completely positive trace-preserving maps), over which information may be transmitted. A major impediment to successful transmission of information over any such network is a noisy environment which may also be described by a quantum channel. Each individual quantum channel in such a sequence models a node $\Xi$, say.  Thus, one may think of a linear quantum network as a sequence of nodes interspersed with the action of a noisy channel in succession. In this context, assessing the capabilities of a network is tantamount to characterising the information-transmission capacities of the sequence of channels as a whole $\left(\Xi \circ \Xi \circ \cdots \circ \Xi = \Xi^n\right)_{n \in \mathbb{N}}$, ideally over a single use. This contrasts the well-studied setting of evaluating quantum channel capacities over several uses in \textit{parallel} $\Xi^{\otimes n}$, in quantum Shannon theory. In fact, this is precisely the problem of bounding the one-shot capacities of an n-fold composition of channels \textit{sequentially}, recently witnessing enormous progress. It was studied in the asymptotic limit by \cite{singh2024zero, guan2018decomposition} for zero-error capacities, by \cite{fawzi2024capacities} for finite error; and over a finite number of sequential uses with non-vanishing error probability by \cite{singh2024information}. Most recently, \cite{singh2025capacities} showed that for channels $\Phi$ with the property $\Phi^l = \Psi$ for some large $l$, interesting Shannon-theoretic properties like strong additivity, strong converse property, coincidence of quantum and private capacities were approximately and exactly satisfied respectively in the finite-error and zero-error cases in the sequential setting, under certain conditions. 

Yet, in all of these works \cite{singh2024zero, guan2018decomposition, fawzi2024capacities, singh2024information, singh2025capacities}, it was noted that finding a non trivial lower bound on the capacity at \textit{short times} or \textit{small n} for $(\Xi^n)_{n \in \mathbb{N}}$ is an open problem. Our work is the first to find a solution. In fact, this problem is what lies at the heart of characterising a noise-robust quantum network  and is indispensable for tasks like distributed quantum computing \cite{grover1997quantum, cirac1999distributed}, blind quantum computing \cite{broadbent2009universal}, secure communication \cite{Vidick_Wehner_2023} and many more that constitute elements of a quantum internet. 

We consider sending quantum data over an n-fold composition of channels $\left(\Xi^n\right)_{n \in \mbb{N}}$. Each $\Xi$ consists of an encoding $\mc{E}$ and decoding $\mc{D}$ operation with the intervening action of a noisy channel $\mc{N}$ as $\Xi = \mc{D} \circ \mc{N} \circ \mc{E}$ (see \cref{def: segment-channel}). Our primary interest is in quantifying a rate of transmission through $(\Xi^n)_{n \in \mbb{N}}$ that tells us how much information we can salvage in the presence of noise as a function of the time scale $n$ (or the number of channels composed); for some finite $n$. Thus, by finding a lower bound \textit{without any computational or physical assumptions} on the sequential one-shot quantum capacity of $(\Xi^n)_{n \in \mbb{N}}$ with explicit $n-$dependence, we hope to determine its efficacy as a model for linear quantum networks permitting finite error in transmission at any instant of time. 

Our approach complements the scenario studied in \cite{singh2024information,singh2025capacities} wherein the ability of a quantum memory, modelled as a \textit{discrete Quantum Markov Semigroup} (see \cref{def: dQMS}), to store information in a noisy environment is quantified by upper and lower bounds on the one-shot capacities of an n-fold composition of channels. This is because we consider the evolution at short time scales. The analysis in \cite{singh2024information}, relevant to our context, quantifies how rapidly such a composition of channels approaches its late-time behaviour described by its limit point — a completely mixing channel, $\Xi_{\infty}$ (see \cref{def: comp-mixed-ch}) mapping every input to the same output. We wish to explore whether it is at all possible to approach its late-time behaviour slowly enough, such that information can be transmitted in a noise-robust manner over $\Xi^n$ in finite time. This question remaining open is evident from the fact that in the simplest case of $\Xi$ being a qubit channel, the lower bound on the one-shot quantum capacity is trivial, i.e.,  either zero or unity arising from the possible block dimensions of the \textit{peripheral space} \cite[Theorem 3.1]{singh2024information} of the channel. Moreover, the lower bound obtained is independent of the number of channels composed, thus a priori bearing no direct relationship of the transmission with time. We consider the setting where \textit{error correction} is allowed between time steps in the evolution of $(\Xi^n)_{n \in \mbb{N}}$ contrary to \cite{singh2024information}.

We show that by means of simple quantum error correcting codes \cite{knill1996theory}, it is possible to transmit and preserve information at short times over an n-fold composition of channels $\left(\Xi^n\right)_{n \in \mathbb{N}}$ long enough to generate entanglement between distant points. This is necessary to accurately model a quantum network that is noise-robust. Another facet of this problem is to understand which codes facilitate better transmission in spite of some permissible error, and if it possible to bound the total error occurring in the noise model. This naturally posits a nuanced relationship between quantum error correction and the quality of transmission over $(\Xi^n)_{n \in \mbb{N}}$ which needs to be quantified in the one-shot sequential setting. We address this gap. In related directions, studies have focused on elaborate constructions of codes, among other primitives, and schemes leading to reliable transmission, see for example \cite{wo2023resource}. However, insights into the structure of simply a linear quantum network that validate its existence remain elusive. Moreover, it is of theoretical interest to non trivially quantify information transmission rates at finite time scales in the sequential setting of quantum Shannon theory. We advance this understanding.

\paragraph{Summary of results} We derive an analytic non trivial lower bound on the one-shot quantum capacity of a sequence of channels, $\Xi^n$ in \cref{thm: continuity-capacity} valid at small $n$, or short time scales. The bound is non-trivial in the sense that it is a function of the number of channels composed, hence the time scale when viewed as a discrete Quantum Markov Semigroup. In \cref{thm: rad-conv}, we quantify structural properties of $\Xi^n$ with analytic conditions under which it preserves or dissipates information over time, using the spectral analysis of channels considered in \cite{szehr2015spectral} and \cite{ruskai2002analysis}. Finally, we provide an analytic bound on the error that can occur during transmission as a function of the noise-model in \cref{thm: error-bound}. As an application, we quantify the error bound of \cref{thm: error-bound} for the infinite-dimensional pure loss channel that is a dominant source of noise in realistic quantum networks. Further, we exemplify our main results by analysing the amplitude damping channel and its bosonic counterpart for capacity and error bounds.

\paragraph{Outline} \cref{sec: prelim} reviews background material to follow the main results in \cref{sec: main}. \cref{sec: app} presents an application to physically relevant noise models of the infinite dimensional pure-loss channel, the amplitude damping channel and its higher-dimensional analogues. \cref{sec: conc} concludes with a summary and outlook.

\section{Preliminaries} \label{sec: prelim}

\paragraph{Notation} For $n\in\N$, write $[n]=\{1,2,\ldots,n\}$. We write $\log$ for the base-$2$ logarithm. We denote quantum systems or registers by capital Latin letters $A, B, C$ and associated Hilbert spaces by capital script letters $\mc{H}_A, \mc{H}_B, \mc{H}_C.$ We work with separable Hilbert spaces which may be infinite dimensional. A joint quantum system $AB$ has Hilbert space $\mc{H}_A \otimes \mc{H}_B \equiv \mc{H}_{AB}.$ The set of bounded linear operators on $\mc{H}$ is denoted by $B(\mc{H}).$ An operator $A \in B(\mc{H})$ is positive, denoted by $A \geq 0$, if $A = \sqrt{A^{\dagger}A}$, where $(\cdot)^{\dagger}$ represents the Hermitian conjugate. We denote by $\text{id}_R$ the identity map on $B(\mc{H}_R)$. We write $\Tr$ for the trace on $B(\mc{H})$. Depending on the context, we write $\Tr$ for the trace on trace class operators $T(\mc{H})$, for separable $\mc{H}$. On $B(\mc{H}_{AB})$, we write the partial trace $\Tr_{B}=\id\otimes\Tr$. For $\rho_{AB}\in B(\mc{H}_{AB})$, write $\rho_A=\Tr_B(\rho_{AB})$. We denote the $L_1$-norm by $\norm{\cdot}_1$ and the trace norm by $\frac{1}{2}\norm{\cdot}_1$. We denote the operator norm by $\norm{\cdot}$. The set of quantum states or density operators is denoted by $D(\mc{H}): \{\rho \in B(\mc{H}), \rho \geq 0, \Tr{\rho} = 1\}$. The identity operator is denoted by $I$. In this work, the set of $2 \times 2$ matrices with complex-valued entries will be frequently used to describe a qubit system and hence we denote such a set by $M_2.$ 

A quantum channel $\Phi: B(\mc{H}) \to B(\mc{H})$ is a linear, completely positive trace preserving (CPTP) map. The Kraus representation of a quantum channel $\Phi: B(\mc{H}) \to B(\mc{H})$ is $\sum \limits_{i = 0}^d \Phi(\rho) = A_i \rho A_i^{\dagger}$ where $A_i \in B(\mc{H})$ are the Kraus operators such that $\sum \limits_{i = 0 }^d A_i^{\dagger} A_i = I,$ and $d = \text{dim}(\mc{H})$. If $\mc{H}$ is separable, then $\sum \limits_{i = 0 }^{\infty} A_i^{\dagger} A_i$ converges to the identity in the strong operator topology, where $A_i \in T(\mc{H}).$ For $\Phi: B(\mc{H}_A) \to B(\mc{H}_B)$, by Stinespring's dilation theorem, there exists an isometry $V: A \to BE$ known as the Stinespring isometry such that $\Phi (\rho) = \Tr_E (V \rho V^{\dagger})$. The complementary channel $\Phi^c: B(\mc{H}_A) \to B(\mc{H}_E)$ is given by $\Phi^c (\rho) = \Tr_B (V \rho V^{\dagger}).$ The diamond norm of a linear map $\Phi: B(\mc{H}_A) \to B(\mc{H}_B)$ is defined as $\norm{\Phi}_{\diamond} \coloneqq \sup \limits_{\norm{M}_1 \leq 1} \norm{\Phi (M_{RA})}_1$, with supremum over all $M \in B(\mc{H}_{RA})$ with $d_R = d_A$ and $\norm{M}_1 \leq 1.$ \\

In this work, we will interchangeably refer to $n$, the number of channels in composition in $(\Xi^n)_{n \in \mbb{N}}$ as a time step parameter. They are equivalent as $(\Xi^n)_{n \in \mbb{N}}$ can also be viewed a discrete Quantum Markov Semigroup (dQMS) modelling the time evolution of an open quantum system as considered in \cite{singh2024information}. A dQMS is defined as follows.

\begin{definition} \label{def: dQMS}
    A discrete Quantum Markov Semigroup (dQMS) on a finite dimensional Hilbert space $\mc{H}$ is a one-parameter family $(\Phi_t)_{t \in \mbb{N}}$ of linear operators $\Phi_t : B(\mc{H}) \to B(\mc{H})$ satisfying: (i) $\Phi_{t + s} = \Phi_t \circ \Phi_s$ for every $t, s \in \mbb{N}$, (ii) $\Phi_0 = \text{id}_{\mc{H}}$, and (iii) $\Phi_t$ is completely positive and trace-preserving for every $t \in \mbb{N}.$
\end{definition}

\noindent A dQMS associated with a quantum channel $\Xi: B(\mc{H'}) \to B(\mc{H'})$ is the sequence $(\Xi^n)_{n \in \mbb{N}}.$ The justification for the nomenclature follows from $\Xi^n$ being \textit{discrete} as $n \in \mbb{N}$, \textit{Markovian} as interactions with an environment are assumed to be weak, and a \textit{semigroup} as it satisfies the semigroup property that $\forall ~r, s \in \mbb{N}: \Xi^{r + s} = \Xi^r \circ \Xi^s.$

\subsection{Statistical measures} \label{sec: stat}

In this work, we shall use the following statistical quantities defined here. The Kullback-Leibler divergence, $D (\cdot \big\Vert \cdot)$ is a type of statistical distance defined as $D (P \big\Vert Q) \coloneqq \sum \limits_{x \in \mc{X}} P(x) \log\left(\frac{P(x)}{Q(x)}\right)$ for probability distributions $P(x)$ and $Q(x).$ It is a measure of information lost when $Q(x)$ is used to approximate $P(x)$ \cite{Cover:2005lom}.

Often, in probability theory, one needs to bound how far the outcome of a random variable $X$ can deviate from its expected value. Bounds on such large deviations are known as \textit{tail bounds}. We shall be concerned with a certain tail bound for the binomial distribution known as the \textit{Chernoff bound} \cite{Arratia1989TutorialOL}, stated as follows.

Let $F (k; n, p)$ denote the cumulative distribution function for a random variable $X$ following the binomial distribution with parameters $n \in \mbb{N}$, $p \in \left[0, 1\right]$. The probability of getting exactly $k$ successes in $n$ independent Bernoulli trials with rate $p$ is given by the probability mass function $f(k, n, p) = \text{Pr}(X = k) = {n \choose k} p^k (1 - p)^{n - k},$ for $k = 0, 1, \ldots, n$ where ${n \choose k} = \frac{n!}{k! (n - k)!}.$ The cumulative distribution function is then expressed as $F (k; n, p) = \text{Pr}(X \leq k) = \sum \limits_{i = 0}^{\lfloor k \rfloor} {n \choose i} p^i (1 - p)^{n - i}$ where $\lfloor k \rfloor$ is the floor function quantifying the greatest integer less than or equal to $k$. For $k \leq np$ the Chernoff bound gives an upper bound on the lower tail of the cumulative distribution function, i.e., the probability that there are at most $k$ successes. The Chernoff bound states $$F(k; n, p) = \text{Pr}(X \leq k) \leq e^{- n D\left(\frac{k}{n} \Vert p\right)}$$ where $D\left(\cdot \Vert \cdot\right)$ is the Kullback-Leibler divergence.

\subsection{Quantum capacities}

We recall that the von Neumann entropy of a quantum state $\rho$ is defined as $S(\rho) \coloneqq - \Tr(\rho \log \rho).$ The coherent information \cite{Bar98} is an entropic quantity denoted by $Q^{(1)}$ characterising the amount of quantum information that can be transmitted over a quantum channel $\Phi$. It is defined as $$Q^{(1)} (\Phi) \coloneqq \max \limits_{\rho} I_c (\Phi, \rho)$$ where $I_c \left(\Phi, \rho\right) = S\left[(\Phi(\rho)\right] - S\left[\Phi^c(\rho)\right]$ is the entropy exchange between the channel and its environment. From the seminal work of \cite{Lloyd, Shor, Devetak}, we know that the quantum capacity of a channel is given by $Q(\Phi) \coloneqq \lim \limits_{n \to \infty} \frac{Q^{(1)}\left(\Phi^{\otimes n}\right)}{n}$. Moreover, the coherent information is superadditive \cite{smith2008quantum}, hence, it lower bounds the quantum capacity. In this work, we shall be concerned with the following definitions, especially for \cref{thm: continuity-capacity}.

\begin{definition} \label{def: segment-channel}
    Let $\Xi: M_2 \to M_2, ~\mc{E}: M_2 \to B(\mc{H}_A), ~\mc{D}: B(\mc{H}_B) \to M_2$, and $\mc{N}: B(\mc{H}_A) \to B(\mc{H}_B)$ be quantum channels such that $\Xi = \mc{D} \circ \mc{N} \circ \mc{E}$. Here, $\mc{E}$ and $\mc{D}$ represent encoding and decoding operations, while $\mc{N}$ represents any noisy channel. $\mc{H}_A, ~\mc{H}_B$ are most generally separable. We consider transmitting information over an n-fold composition $\left(\Xi^n\right)_{n \in \mbb{N}}.$
\end{definition}

\begin{remark} \label{rem: qubit-id}
    Let $\text{id}_2: M_2 \to M_2$ denote the identity channel on $M_2$. Its action on a state $\rho \in D(\mc{H})$ in Kraus representation is $\text{id}_2(\rho) = \rho$. Since this channel sends the entire input to the receiver, the environment does not receive any information. It is easy to see that the action of the complementary channel $\text{id}^c_2 (\rho) = \Tr(\rho)$, and hence $I_c(\text{id}_2, \rho) = S(\rho)$ which implies that the coherent information $Q^{(1)}(\text{id}_2) = \max \limits_{\rho} S(\rho) = \log 2 = 1.$
\end{remark}

\paragraph{Continuity of capacities} The continuity of capacities of quantum channels was established by Leung and Smith \cite{leung2009continuity}. We use a refinement that is tight due to Shirokov \cite{shirokov2017tight}.

\begin{lemma} \cite[Proposition 30]{shirokov2017tight} \label{lem: continuity}
    Let $\Phi$ and $\Psi$ be channels from $A$ to $B$. Then
    \begin{equation}
        \abs{Q(\Phi) - Q(\Psi)} \leq ~2\ve \log d_B + g(\ve)
    \end{equation}
    where $\ve = \frac{1}{2} \norm{\Phi - \Psi}_{\diamond}$, $d_B = \text{dim}(\mc{H})$, and $g(\ve) = (1 + \ve) h \left(\frac{\ve}{1 + \ve}\right).$ Here $h(\cdot)$ denotes the binary entropy given by $h(\ve) \coloneqq - \ve \log(\ve) - (1 - \ve) \log(1 - \ve).$
\end{lemma}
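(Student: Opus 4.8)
The plan is to work from the regularised formula $Q(\Phi)=\lim_{n\to\infty}\frac1n Q^{(1)}(\Phi^{\otimes n})$ together with $Q^{(1)}(\Lambda)=\max_\rho I_c(\Lambda,\rho)$ and $I_c(\Lambda,\rho)=S(\Lambda(\rho))-S(\Lambda^c(\rho))$; since the asserted bound is symmetric in $\Phi,\Psi$, it is enough to bound $Q(\Phi)-Q(\Psi)$ from above. The single-copy building block is a uniform continuity estimate for the coherent information. Fixing an input $\rho$ on $\mc{H}_A$ with a purification $\psi_{RA}$ and writing $\omega^\Phi_{RB}=(\id_R\otimes\Phi)(\psi_{RA})$, the definition of the diamond norm gives $\frac12\norm{\omega^\Phi_{RB}-\omega^\Psi_{RB}}_1\le\ve$. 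The useful rewriting is that, for the Stinespring dilation on $RBE$, $I_c(\Phi,\rho)=S(\omega^\Phi_B)-S(\omega^\Phi_{RB})=-S(R|B)_{\omega^\Phi}=I(R:B)_{\omega^\Phi}-S(\rho)$, so $I_c$ is a function of the output-and-reference state $\omega^\Phi_{RB}$ alone --- one never has to compare the complementary channels $\Phi^c,\Psi^c$, which are only $O(\sqrt\ve)$-close and whose comparison costs \cite{leung2009continuity} a factor of $\sqrt\ve$ --- and, $S(\rho)$ being channel-independent, $I_c(\Phi,\rho)-I_c(\Psi,\rho)=I(R:B)_{\omega^\Phi}-I(R:B)_{\omega^\Psi}$, which is why the \emph{output} dimension $d_B$ rather than $d_A$ governs the bound. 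Applying the Alicki-Fannes-Winter technique to this mutual information --- split $\omega^\Phi_{RB}-\omega^\Psi_{RB}$ into positive and negative parts, renormalise, form the common state, adjoin a binary classical flag, and use concavity of conditional entropy --- yields $\abs{I_c(\Phi,\rho)-I_c(\Psi,\rho)}\le 2\ve\log d_B+g(\ve)$, uniformly in $\rho$.

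The genuine difficulty is that this single-copy bound does not pass to $Q$ by naive regularisation: $Q^{(1)}$ is only superadditive and $\norm{\Phi^{\otimes n}-\Psi^{\otimes n}}_\diamond$ may grow like $n\ve$, so comparing $\Phi^{\otimes n}$ with $\Psi^{\otimes n}$ copy by copy --- equivalently, using a Fannes-type bound on the $n$-fold outputs --- gives an estimate that diverges after division by $n$. The quantum capacity is not single-letter, and the content of the lemma is exactly in getting around this. Following \cite{leung2009continuity}, sharpened in \cite{shirokov2017tight}, I would keep the comparison at the level of one channel use by building an auxiliary flagged channel $\mc{M}$ with a classical flag handed to the receiver such that (i) $\Phi$ is obtained from $\mc{M}$ by fixed channels on its input and output, whence $Q(\Phi)\le Q(\mc{M})$ by data processing, and (ii) $\mc{M}$ reduces to $\Psi$ outside a branch of total weight $\sim\ve$, so that reading the flag and discarding that branch --- at a cost of at most its output dimension $\log d_B$ per $\sim\ve$-fraction of uses, plus $g(\ve)$ for the flag --- gives $Q(\mc{M})\le Q(\Psi)+2\ve\log d_B+g(\ve)$. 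Combining (i) and (ii) and exchanging the roles of $\Phi$ and $\Psi$ gives $\abs{Q(\Phi)-Q(\Psi)}\le 2\ve\log d_B+g(\ve)$ with $g(\ve)=(1+\ve)h\left(\frac{\ve}{1+\ve}\right)$, as asserted.

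I expect the main obstacle to be making (i) precise: the naive mixture $(1-\ve)\Psi+\ve\mc{R}$ does not serve, because $\Phi-(1-\ve)\Psi$ need not be completely positive, so the auxiliary channel has to be assembled at the level of Stinespring dilations or Choi operators; together with this, extracting the constants sharply enough to land on $2\ve\log d_B+g(\ve)$ rather than a $\sqrt\ve$-type bound is what forces the Alicki-Fannes-Winter machinery in place of Fannes plus Fuchs-van de Graaf. Since the statement is quoted verbatim from \cite[Proposition 30]{shirokov2017tight}, one may instead simply invoke that reference; the sketch above records the structure of its proof.
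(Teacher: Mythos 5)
The paper offers no proof of this lemma at all --- it is imported verbatim as \cite[Proposition 30]{shirokov2017tight} --- and your bottom line, that one may simply invoke that reference, is exactly what the paper does. Your accompanying sketch of the underlying Leung--Smith/Shirokov argument (rewriting $I_c(\Phi,\rho)$ in terms of the reference--output state so that only the output dimension $d_B$ enters, applying the Alicki--Fannes--Winter technique, and circumventing the regularisation blow-up via the flagged auxiliary channel) is a reasonable account of the cited proof's structure, but it is not required to match the paper.
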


\begin{remark} \label{rem: shir-Q1}
    We note that the proof of \cite[Proposition 30]{shirokov2017tight} follows by bounding arbitrary n-shot capacities, not necessarily asymptotic; hence in the notation of \cref{lem: continuity}, the following holds $$ \abs{Q^{(1)}(\Phi) - Q^{(1)}(\Psi)} \leq ~2\ve \log d_B + g(\ve). $$
\end{remark}

\subsection{Spectral properties of channels}

We outline certain properties of channels originating from the structure of its spectrum that are used in this work. They are important for following \cref{thm: rad-conv}. 

\begin{definition} \label{def: T-matrix}
    Following \cite{ruskai2002analysis}, we define the $\mathbf{T}$-matrix of a channel $\Phi: M_2 \to M_2$ as a $4 \times 4$ matrix $T_{\Phi}$ with elements $t_{mn} = \ang{P , \Phi (Q)}: P, Q \in \{I, X, Y, Z\},$ where $X, Y, Z$ are the Pauli matrices, $\ang{\cdot , \cdot}$ denotes the normalised Hilbert-Schmidt inner product, and $I$ is the identity matrix in two dimensions.    
\end{definition}

\noindent Under the inner product in \cref{def: T-matrix}, the Pauli matrices form an orthonormal basis.

\begin{remark}
    The spectral radius of a channel $\Phi$ is the largest magnitude of its eigenvalues. Gelfand's formula \cite{Bla06} states that the spectral radius of an operator $A$ is given by $\lim \limits_{n \to \infty} \norm{A^n}^{1/n}.$
\end{remark}

\begin{lemma}\cite{ruskai2002analysis} \label{lem: ruskai-Tmat-qubit-ch}
    Let $\Xi: M_2 \to M_2$ be a qubit channel. Then the
    $\mathbf{T}$-matrix of $\Xi$, up to unitary conjugation, is always of the form
    \begin{equation} \label{eq: Tseg}
        T_{\Xi} = \begin{pmatrix}
            1 & 0 & 0 & 0 \\
            t_1 & \lambda_1 & 0 & 0 \\
            t_2 & 0 & \lambda_2 & 0 \\
            t_3 & 0 & 0 & \lambda_3
        \end{pmatrix}.
    \end{equation} \\    
\end{lemma}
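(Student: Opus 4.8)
The plan is to pass to the Bloch/Pauli representation, record the constraints that trace preservation imposes on $T_\Xi$, and then diagonalize the remaining $3\times3$ block by a singular value decomposition, exploiting the fact that unitary conjugations act as rotations on the Bloch sphere.

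First I would note the structural features of $T_\Xi$ valid for \emph{any} qubit channel. Being completely positive, $\Xi$ is Hermiticity-preserving, so in the Hermitian orthonormal basis $\{I,X,Y,Z\}$ every entry $t_{mn}=\langle P,\Xi(Q)\rangle$ is real; hence $T_\Xi$ is a real $4\times4$ matrix. Trace preservation gives $\langle I,\Xi(Q)\rangle=\tfrac12\Tr\Xi(Q)=\tfrac12\Tr Q=\langle I,Q\rangle$ for all $Q$, so the top row of $T_\Xi$ is $(1,0,0,0)$. Splitting the Pauli basis as $\{I\}\oplus\operatorname{span}\{X,Y,Z\}$, this means
\[
T_\Xi=\begin{pmatrix}1 & 0 \\ \vec t & M\end{pmatrix},
\]
where $\vec t\in\mathbb R^3$ is the "translation" of the affine Bloch map and $M\in M_3(\mathbb R)$ has entries $\langle\sigma_i,\Xi(\sigma_j)\rangle$. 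Since the claimed form \eqref{eq: Tseg} imposes nothing on $\vec t$ beyond realness, only $M$ must be brought to diagonal form.

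Next I would use the action of unitary conjugations on the Bloch sphere. For $U\in SU(2)$ the channel $\mathcal U(\rho)=U\rho U^\dagger$ fixes $I$ and acts on $\operatorname{span}\{X,Y,Z\}$ as a rotation $R_U\in SO(3)$, and $U\mapsto R_U$ is onto $SO(3)$. Replacing $\Xi$ by the locally equivalent channel $\Xi'=\mathcal U_1\circ\Xi\circ\mathcal U_2$ thus replaces $T_\Xi$ by
\[
T_{\Xi'}=\begin{pmatrix}1 & 0\\ 0 & R_{U_1}\end{pmatrix}T_\Xi\begin{pmatrix}1&0\\0&R_{U_2}\end{pmatrix}=\begin{pmatrix}1&0\\ R_{U_1}\vec t & R_{U_1}MR_{U_2}\end{pmatrix},
\]
so it suffices to choose $R_1,R_2\in SO(3)$ with $R_1MR_2$ diagonal. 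I would take a singular value decomposition $M=O_1\Sigma O_2$ with $O_1,O_2\in O(3)$ and $\Sigma=\operatorname{diag}(s_1,s_2,s_3)$, $s_i\ge0$; if $\det O_i=-1$ I absorb a reflection $\operatorname{diag}(1,1,-1)$ into $\Sigma$, which only flips the sign of $s_3$, obtaining $M=R_1\Lambda R_2$ with $R_1,R_2\in SO(3)$ and $\Lambda=\operatorname{diag}(\lambda_1,\lambda_2,\lambda_3)$, $\lambda_i\in\mathbb R$. Picking $U_1,U_2$ with $R_{U_1}=R_1^{-1}$, $R_{U_2}=R_2^{-1}$ (possible by surjectivity onto $SO(3)$) yields $R_{U_1}MR_{U_2}=\Lambda$, and setting $(t_1,t_2,t_3)^{\mathsf T}:=R_1^{-1}\vec t$ produces exactly \eqref{eq: Tseg}.

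The only genuinely delicate points are bookkeeping: ensuring the orthogonal factors of the SVD may be taken in $SO(3)$ rather than $O(3)$ — handled by the sign flip above, which is precisely why the $\lambda_i$ in \eqref{eq: Tseg} are allowed to be negative and need not be literal singular values — and being explicit that "unitary conjugation" here is equivalence under \emph{independent} unitaries applied before and after $\Xi$, since a single two-sided conjugation $\mathcal U\circ\Xi\circ\mathcal U^{-1}$ would only implement an orthogonal similarity $M\mapsto RMR^{-1}$, which does not diagonalize a generic real $3\times3$ matrix. Beyond these clarifications I do not expect any substantive obstacle.
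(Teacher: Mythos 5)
Your argument is correct and is essentially the standard proof of this canonical form from the cited reference of Ruskai, Szarek and Werner: trace preservation fixes the first row, Hermiticity preservation makes $T_\Xi$ real, and the $3\times 3$ Bloch block is diagonalized by a singular value decomposition whose orthogonal factors are pushed into $SO(3)$ by absorbing a reflection into the sign of one $\lambda_i$, then realized as pre- and post-unitary conjugations via the surjection $SU(2)\to SO(3)$. The paper itself states the lemma without proof, and your two clarifying remarks --- that the $\lambda_i$ need not be nonnegative and that ``unitary conjugation'' must mean independent unitaries before and after the channel --- are exactly the points one should flag.
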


\noindent The statement in \cref{def: T-matrix} is the natural representation of a quantum channel, see \cite[Section 2.2.2]{watrous2018theory}.

\begin{definition} \label{def: comp-mixed-ch}
    Let $\Xi_{\infty} \coloneqq \sum \limits_{\lvert \lambda_i \rvert = 1} \lambda_i \mc{P}_i$ be the CPTP map such that $\mc{P}_i$ are the spectral projectors corresponding to eigenvalues of $\Xi$ with unit magnitude, where $\Xi$ is a quantum channel as in \cref{def: segment-channel}. $\Xi_{\infty}$ arises as a limit point of $(\Xi^n)_{n \in \mbb{N}}$ \cite{szehr2015spectral}.
\end{definition}

\begin{lemma} \cite[Lemma III.1]{szehr2015spectral} \label{lem: wolf-spec-rad}
     Let $\mu$ be the spectral radius of $\Xi - \Xi_{\infty}$ and $R_n$ be the radius of convergence of $\left\Vert T_{\Xi}^n - T_{\Xi_{\infty}}^n \right\Vert$. Then,
     \begin{enumerate}
        \item \begin{equation} \label{eq: radius-of-conv}
                \left\Vert T_{\Xi}^n - T_{\Xi_{\infty}}^n \right\Vert \leq \left(\frac{1 + \mu}{2}\right)^n \coloneqq R_n, ~\text{for sufficiently large} ~n.
              \end{equation}
        \item \begin{equation} \label{eq: op-norm-T}
                \left\Vert T_{\Xi}^n - T_{\Xi_{\infty}}^n \right\Vert = \left\Vert \left(T_{\Xi} - T_{\Xi_{\infty}}\right)^n \right\Vert, ~\forall n.
              \end{equation} 
    \end{enumerate}
\end{lemma}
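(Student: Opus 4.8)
The plan is to reduce both parts to the fact that $\Xi_{\infty}$ and $\Xi-\Xi_{\infty}$ are the two pieces of the spectral (Jordan) decomposition of $\Xi$ supported on complementary invariant subspaces, together with Gelfand's formula. First I would record the structural input. Since $\Xi$ is completely positive and trace preserving, $\norm{\Xi^{n}(X)}_{1}\le\norm{X}_{1}$ for every $X$ and every $n$, so $\Xi$ is power bounded on the (finite-dimensional) space $M_{2}$; a power-bounded operator admits no nontrivial Jordan block at an eigenvalue of unit modulus (its powers would grow polynomially), so the peripheral spectrum of $\Xi$ is semisimple. Hence, writing $\mc{P}_{P}\coloneqq\sum_{\abs{\lambda_i}=1}\mc{P}_i$ and $\mc{P}_{Q}\coloneqq I-\mc{P}_{P}$ for the complementary spectral projectors, $\mc{P}_{P}$ commutes with $\Xi$, each peripheral eigenvalue obeys $\Xi\mc{P}_i=\lambda_i\mc{P}_i$, and therefore $\Xi_{\infty}=\sum_{\abs{\lambda_i}=1}\lambda_i\mc{P}_i=\Xi\mc{P}_{P}$ while $\Xi-\Xi_{\infty}=\Xi\mc{P}_{Q}$.

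Next I would use commutation to collapse the binomial expansion. Since $\mc{P}_{P}$ commutes with $\Xi$ and $\mc{P}_{P}\mc{P}_{Q}=0$,
\begin{equation}
    \Xi_{\infty}\,(\Xi-\Xi_{\infty}) = \Xi\mc{P}_{P}\,\Xi\mc{P}_{Q} = \Xi^{2}\,\mc{P}_{P}\mc{P}_{Q} = 0,
\end{equation}
and symmetrically $(\Xi-\Xi_{\infty})\,\Xi_{\infty}=0$. Thus $A\coloneqq\Xi_{\infty}$ and $B\coloneqq\Xi-\Xi_{\infty}$ commute with $AB=BA=0$, so in the expansion of $(A+B)^{n}=\Xi^{n}$ every mixed term $A^{k}B^{n-k}$ with $1\le k\le n-1$ equals $A^{k-1}(AB)B^{n-k-1}=0$, leaving
\begin{equation}
    \Xi^{n}=\Xi_{\infty}^{n}+(\Xi-\Xi_{\infty})^{n}\quad\text{for all }n.
\end{equation}
The $\mathbf{T}$-matrix of \cref{def: T-matrix} is just the matrix of a superoperator on $M_{2}$ in the orthonormal Pauli basis, so the assignment $\Phi\mapsto T_{\Phi}$ is linear and multiplicative, $T_{\Phi\circ\Psi}=T_{\Phi}T_{\Psi}$. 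Applying it to the last identity gives $T_{\Xi}^{n}-T_{\Xi_{\infty}}^{n}=\left(T_{\Xi}-T_{\Xi_{\infty}}\right)^{n}$ as $4\times4$ matrices, and taking the operator norm is exactly \cref{eq: op-norm-T}, i.e.\ part~2.

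For part~1, apply Gelfand's formula to the matrix $T_{\Xi}-T_{\Xi_{\infty}}$: $\lim_{n\to\infty}\norm{(T_{\Xi}-T_{\Xi_{\infty}})^{n}}^{1/n}$ equals its spectral radius, which coincides with the spectral radius $\mu$ of $\Xi-\Xi_{\infty}$ (the two being the same linear map in different guises). Because $\Xi-\Xi_{\infty}=\Xi\mc{P}_{Q}$ has spectrum $\{0\}\cup\{\lambda_i:\abs{\lambda_i}<1\}$, we have $\mu<1$. Choosing $\epsilon\coloneqq\frac{1-\mu}{2}>0$, the definition of the limit yields an $N$ with $\norm{(T_{\Xi}-T_{\Xi_{\infty}})^{n}}^{1/n}\le\mu+\epsilon=\frac{1+\mu}{2}$ for all $n\ge N$; combining with part~2 gives $\norm{T_{\Xi}^{n}-T_{\Xi_{\infty}}^{n}}\le\left(\frac{1+\mu}{2}\right)^{n}=R_n$ for sufficiently large $n$, i.e.\ \cref{eq: radius-of-conv}. (If $\Xi=\Xi_{\infty}$ then $\mu=0$ and both claims are trivial.) The one genuinely delicate step is the semisimplicity of the peripheral spectrum: without it the splitting $\Xi^{n}=\Xi_{\infty}^{n}+(\Xi-\Xi_{\infty})^{n}$ fails, since a peripheral nilpotent part would survive in the cross terms, so everything hinges on the power-boundedness of quantum channels; the remainder is bookkeeping with commuting, complementarily supported pieces plus Gelfand's formula.
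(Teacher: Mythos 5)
Your proof is correct. Note that the paper itself offers no proof of this lemma --- it is imported verbatim by citation from Szehr et al.\ --- so what you have written is a self-contained reconstruction of the argument in the cited reference rather than an alternative to anything in this manuscript. The reconstruction is sound: the one genuinely load-bearing step, as you say, is that the peripheral spectrum of a quantum channel is semisimple, which you correctly derive from trace-norm contractivity (power-boundedness) of CPTP maps; from there $\Xi_{\infty}=\Xi\mc{P}_{P}$ and $\Xi-\Xi_{\infty}=\Xi\mc{P}_{Q}$ are mutually annihilating commuting pieces, the binomial expansion collapses to $\Xi^{n}=\Xi_{\infty}^{n}+(\Xi-\Xi_{\infty})^{n}$, multiplicativity of the Pauli-basis representation $\Phi\mapsto T_{\Phi}$ gives part~2, and Gelfand's formula with $\epsilon=\tfrac{1-\mu}{2}$ (using $\mu<1$, which holds because all unit-modulus eigenvalues have been subtracted off) gives part~1 for sufficiently large $n$. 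The only cosmetic caveat is that your argument establishes part~2 for the superoperators themselves and then transports it to the $\mathbf{T}$-matrices; that is exactly the right order of operations, and consistent with how \cref{rem: n-spec-rad} later identifies $\left\Vert \Xi^{n}-\Xi_{\infty}^{n}\right\Vert$ with $\left\Vert T_{\Xi}^{n}-T_{\Xi_{\infty}}^{n}\right\Vert$.
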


\begin{remark} \label{rem: n-spec-rad}
    For finite $n$, we have from \cite{szehr2015spectral} the following: $$ \left\Vert \Xi^n - \Xi_{\infty}^n \right\Vert = \left\Vert T_{\Xi}^n - T_{\Xi_{\infty}}^n \right\Vert \leq k \mu^n,$$ where $k$ depends on the spectrum of $\Xi$, on $n$, and on the dimension of the input Hilbert space of $\Xi$.
\end{remark}

\subsection{Quantum error correction} \label{sec: qecc}

In this work, we are interested in the transmission of information over a sequence of channels $\Xi^n$ as in \cref{def: segment-channel}. Notably, we allow the repetitive action of noise followed by error correction in succession. The following definitions are useful to follow \cref{thm: error-bound}.

\begin{definition} \label{def: encoder} 
    Let $\mathcal{C} = \spn\{\ket{0}_L, \ket{1}_L\}$ be a codespace encoding a logical qubit using the encoding channel $\mc{E}$ as in \cref{def: segment-channel} with the action $\mc{E} (\sigma) = S \sigma S^{\dagger},$ where $S = \ket{0}_L\bra{0}_A + \ket{1}_L\bra{1}_A$. $~~\ket{\cdot}_L$ denotes the logical basis states of the codespace, and $\ket{\cdot}_A$ denotes a physical state in the encoding quantum system. 
\end{definition}

\begin{definition} \label{def: recov}
    In quantum error correction \cite{knill1996theory}, one typically wants to know the action of a noisy channel $\mc{N}$ on an input state $\rho$, and find a recovery operation. A valid recovery operation denoted by $\mc{R}$ is a quantum channel $\mc{R}$ such that $\frac{1}{2} \norm{\mc{R} \circ \mc{N} (\rho) - \rho}_1 = 0.$ This is the case of perfect recovery. A generalised notion of \textit{approximate recovery} also exists \cite{leung1997approximate} wherein $\frac{1}{2} \norm{\mc{R} \circ \mc{N} (\rho) - \rho}_1 \approx O(\lambda^2)$, for some noise parameter $\lambda.$ 
\end{definition}

\begin{remark} \label{eq: recovery}
    Let $\mc{R}$ be a quantum channel that is a valid recovery operation for any noisy channel $\Phi$ acting on an encoded state $\mc{E}(\sigma)$. Since, $\mc{E}$ is an isometry, we can write
    \begin{align*}
        \frac{1}{2}\left\Vert \mc{D} \circ \Phi \circ \mc{E} (\sigma) - \sigma \right\Vert_1 &= \frac{1}{2}\left\Vert \mc{E} \circ \mc{D} \circ \Phi \circ \mc{E}(\sigma) - \mc{E}(\sigma) \right\Vert_1 \\
        &= \frac{1}{2}\left\Vert \mc{R} \circ \Phi (\rho) - \rho \right\Vert_1
    \end{align*} 
    where in the last equality we use $\mc{R} = \mc{E} \circ \mc{D}$ and $\mc{E}(\sigma) = \rho$ for some input state $\sigma \in \mc{H}_A$ from the physical encoding system. Another way to write an encoded state $\rho$ is to consider $\rho = \ket{\psi}\bra{\psi}$, where $\ket{\psi} = \alpha \ket{0}_L + \beta \ket{1}_L.$
\end{remark}

\section{Main Results}  \label{sec: main}

If we wish to send quantum data over a sequence of channels $\left(\Xi^n\right)_{n \in N}$ by performing an error correction procedure with an encoding-decoding pair $\left(\mc{E}, \mc{D}\right)$ at each time step, $n$ to protect from the action of noise $\mc{N}$ , then we first need to know how effective the code is. This is quantified by $\ve$ as the closeness of $\Xi = \mc{D} \circ \mc{N} \circ \mc{E}$ to the identity channel in diamond norm. Next, we need to know the evolution of $\Xi^n$ in time, i.e., the distance of $\Xi^n$ from the identity map as $\Xi$ evolves with $n$. We see that we cannot do better than $n\ve$, in general. Given that this holds, and the fact that quantum capacities are continuous, we can exactly specify how much coherent information is preserved at each node $n$, for $\Xi^n$. This idea is made precise in the following theorem. Moreover, since the coherent information is an achievable rate for distillable entanglement \cite{devetak2005distillation}, any non-trivial lower bound indicates that entanglement can be generated between the $n=1$ and $n=n_{\text{final}}$ nodes by transmission over the sequence $\Xi^n$, provided $Q^{(1)}(\Xi^{n_{\text{final}}}) > 0.$

\begin{theorem} \label{thm: continuity-capacity}
    Let $\Xi$ be a quantum channel as in \cref{def: segment-channel}. Let $\ve \in \left[0, 1\right]$. For any nonnegative integer $n$, if 
    \begin{equation} \label{eq: dist-id}
        \frac{1}{2}\left\Vert \Xi - \text{id}_2 \right\Vert_{\diamond} \leq \ve,
    \end{equation}
    then
    \begin{equation} \label{eq: cont-cap}
        Q^{(1)} (\Xi^n) \geq 1 - 2n\ve - (1 + n\ve) h \left(\frac{n\ve}{1 + n\ve}\right).
    \end{equation}    
\end{theorem}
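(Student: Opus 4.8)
The plan is to combine three ingredients: a telescoping estimate showing that the diamond distance of $\Xi^n$ from the identity channel grows at most linearly in $n$; the one-shot form of the continuity bound recorded in \cref{rem: shir-Q1}; and the value $Q^{(1)}(\text{id}_2) = 1$ from \cref{rem: qubit-id}, together with an elementary monotonicity observation that lets us replace the sharp diamond distance by the cruder bound $n\ve$.

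First I would show that $\frac{1}{2}\norm{\Xi^n - \text{id}_2}_{\diamond} \leq n\ve$. Using the identity $\text{id}_2^{\,m} = \text{id}_2$ for all $m$, one has the telescoping expansion $\Xi^n - \text{id}_2 = \sum_{k=0}^{n-1} \Xi^k \circ (\Xi - \text{id}_2)$. Then the triangle inequality for the diamond norm, its submultiplicativity under composition, and the fact that $\norm{\Xi^k}_{\diamond} = 1$ (since $\Xi^k$ is CPTP) give
\begin{equation*}
    \norm{\Xi^n - \text{id}_2}_{\diamond} \leq \sum_{k=0}^{n-1} \norm{\Xi^k}_{\diamond}\,\norm{\Xi - \text{id}_2}_{\diamond} = n \norm{\Xi - \text{id}_2}_{\diamond} \leq 2 n \ve
\end{equation*}
by \cref{eq: dist-id}. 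This is the step where, in general, one cannot do better than $n\ve$; the linear accumulation of error is intrinsic to the sequential setting.

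Next, set $\ve_n \coloneqq \frac{1}{2}\norm{\Xi^n - \text{id}_2}_{\diamond} \leq n\ve$. Since $\Xi^n$ and $\text{id}_2$ are both channels $M_2 \to M_2$, with output dimension $d_B = 2$, applying \cref{rem: shir-Q1} with $\Phi = \Xi^n$ and $\Psi = \text{id}_2$ yields
\begin{equation*}
    \abs{Q^{(1)}(\Xi^n) - Q^{(1)}(\text{id}_2)} \leq 2\ve_n \log 2 + g(\ve_n) = 2\ve_n + (1+\ve_n)\, h\!\left(\frac{\ve_n}{1+\ve_n}\right).
\end{equation*}
By \cref{rem: qubit-id}, $Q^{(1)}(\text{id}_2) = 1$, so $Q^{(1)}(\Xi^n) \geq 1 - 2\ve_n - (1+\ve_n) h\!\left(\frac{\ve_n}{1+\ve_n}\right)$.

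Finally I would note that the map $\varphi(x) \coloneqq 2x + (1+x) h\!\left(\frac{x}{1+x}\right) = 2x + (1+x)\log(1+x) - x\log x$ is nondecreasing on $[0,\infty)$, since $\varphi'(x) = 2 + \log\frac{1+x}{x} > 0$. Hence enlarging the argument from $\ve_n$ to the upper bound $n\ve$ only weakens the estimate, and substituting gives \cref{eq: cont-cap}. The argument is largely routine; the only points needing care are invoking the continuity estimate in its one-shot (rather than merely asymptotic) form from \cref{rem: shir-Q1}, correctly identifying the output dimension as $d_B = 2$, and the monotonicity step that justifies passing from the sharp constant $\ve_n$ to $n\ve$. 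I do not expect a genuine obstacle here — the content of the theorem is really the conceptual point that sequential error correction gives a time-dependent, non-trivial capacity lower bound, with the proof being an assembly of known estimates.
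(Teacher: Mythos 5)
Your proof is correct and follows essentially the same route as the paper's: a telescoping decomposition of $\Xi^n - \text{id}_2$, contractivity of the diamond norm under composition with the CPTP maps $\Xi^k$ to obtain $\tfrac{1}{2}\left\Vert \Xi^n - \text{id}_2 \right\Vert_{\diamond} \leq n\ve$, and then the one-shot form of Shirokov's continuity bound together with $Q^{(1)}(\text{id}_2) = 1$. Your explicit verification that $x \mapsto 2x + (1+x)\,h\!\left(\frac{x}{1+x}\right)$ is nondecreasing, which justifies substituting the cruder bound $n\ve$ for the actual diamond distance inside the continuity estimate, is a small step the paper leaves implicit.
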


\begin{proof}
    First we note that $\text{id}_2^n (\rho) = \text{id}_2 (\rho).$ Given $\left\Vert \Xi - \text{id}_2 \right\Vert_{\diamond} \leq \ve$, then
    \begin{align*}
        \left\Vert \Xi^n - \text{id}_2 \right\Vert_{\diamond} &= \left\Vert \Xi^n - \Xi^{n - 1} + \Xi^{n - 1} - \Xi^{n - 2} + \Xi^{n - 2} - \ldots + \Xi - \text{id}_2 \right\Vert_{\diamond} \\
        &= \left\Vert \Xi^{n - 1} \left(\Xi - \text{id}_2\right) + \Xi^{n - 2}\left(\Xi - \text{id}_2\right) + \ldots + \Xi - \text{id}_2 \right\Vert_{\diamond} \\
        &\leq \left\Vert \Xi^{n - 1} \left(\Xi - \text{id}_2\right)\right\Vert_{\diamond} + \left\Vert \Xi^{n - 2}\left(\Xi - \text{id}_2\right)\right\Vert_{\diamond} + \ldots + \left\Vert\Xi - \text{id}_2 \right\Vert_{\diamond} \\
        &\leq \left\Vert \Xi - \text{id}_2 \right\Vert_{\diamond} + \left\Vert \Xi - \text{id}_2 \right\Vert_{\diamond} + \ldots + \left\Vert\Xi - \text{id}_2 \right\Vert_{\diamond} \\
        &\leq 2 n\ve
    \end{align*}
    where the first and second equality is a telescoping argument, the first inequality is by triangle inequality, the second inequality is by data processing inequality, and the final inequality is by the given condition \eqref{eq: dist-id}. Now, applying \cref{rem: shir-Q1,lem: continuity} to $\frac{1}{2} \left\Vert \Xi^n - \text{id}_2 \right\Vert_{\diamond} \leq n\ve$ and noting that $Q^{(1)}(\text{id}_2) = 1$ (see \cref{rem: qubit-id}) the result follows.
\end{proof}

\begin{remark} \label{rem: Shir-gen}
    In the above theorem, we showed the case for $\Xi$ being a qubit channel as in \cref{def: segment-channel}, as this was the case of interest with non trivial lower bounds possibly existing for higher dimensional channels in \cite{singh2024information}, although not as a function of $n$. We note that our proof identically holds for any $\Xi$ with finite output dimension. This is because Shirokov's continuity bound (\cref{lem: continuity}) holds for any $\Xi': B({\mc{\bar{H}}_A)} \to B(\mc{\bar{H}}_B)$, where $\text{dim}(\mc{\bar{H}}_B), ~\text{dim}(\mc{\bar{H}}_{B'}) < \infty$, but $\text{dim}(\mc{\bar{H}}_A), ~\text{dim}(\mc{H}_{A'}) = \infty$ such that $\Xi' = \mc{D'} \circ \mc{N'} \circ \mc{E'}$, for some quantum channels $\mc{E'} : B(\mc{\bar{H}}_A) \to B(\mc{\bar{H}}_{A'}), ~\mc{D'} : B(\mc{\bar{H}}_{B'}) \to B(\mc{\bar{H}}_B), ~\mc{N'} : B(\mc{\bar{H}}_{A'}) \to B(\mc{\bar{H}}_{B'})$. Since $Q^{(1)}(\text{id}_B) = \log d_B$, where $d_B = \text{dim}(\mc{\bar{H}}_B)$, follows trivially from the argument in \cref{rem: qubit-id}, this leads to the following corollary.
\end{remark}

\begin{corollary} \label{cor: gen-dim-cont-cap}
    Under the conditions of \cref{rem: Shir-gen}, if $\frac{1}{2} \norm{\Xi' - \text{id}_B}_{\diamond} \leq \ve$, then
    $Q^{(1)} (\Xi'^n) \geq \log d_B(1 - 2n\ve) - (1 + n\ve) h \left(\frac{n\ve}{1 + n\ve}\right).$
\end{corollary}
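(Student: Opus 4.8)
The plan is to run the proof of \cref{thm: continuity-capacity} essentially verbatim, replacing the qubit identity channel $\text{id}_2$ by the finite-dimensional identity channel $\text{id}_B$ on the input/output system of $\Xi'$ and the value $\log 2 = 1$ by $\log d_B$, where $d_B = \text{dim}(\mc{\bar{H}}_B) < \infty$. First I would note that $\text{id}_B^n = \text{id}_B$ for every $n$, so the telescoping identity
\[
\Xi'^n - \text{id}_B \;=\; \sum_{j=0}^{n-1} \Xi'^{\,j}\!\left(\Xi' - \text{id}_B\right)
\]
holds. Applying the triangle inequality for the diamond norm and then the data processing inequality $\norm{\Xi'^{\,j}(\cdot)}_{\diamond} \leq \norm{\cdot}_{\diamond}$ to each of the $n$ summands---valid because every $\Xi'^{\,j}$ is CPTP, irrespective of whether the intermediate spaces $\mc{\bar{H}}_{A'}, \mc{\bar{H}}_{B'}$ are finite- or infinite-dimensional---gives $\norm{\Xi'^n - \text{id}_B}_{\diamond} \leq n\,\norm{\Xi' - \text{id}_B}_{\diamond} \leq 2n\ve$.

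Next I would apply the one-shot continuity estimate of \cref{rem: shir-Q1} together with \cref{lem: continuity} to $\frac{1}{2}\norm{\Xi'^n - \text{id}_B}_{\diamond} \leq n\ve$; since the right-hand side $2\ve'\log d_B + g(\ve')$ of that bound is nondecreasing in $\ve'$, this yields
\[
\abs{Q^{(1)}(\Xi'^n) - Q^{(1)}(\text{id}_B)} \;\leq\; 2n\ve\,\log d_B + (1 + n\ve)\,h\!\left(\frac{n\ve}{1 + n\ve}\right),
\]
where the only dimension entering \cref{lem: continuity} is the finite output dimension $d_B$. Finally, by the same computation as in \cref{rem: qubit-id}---the complementary channel of $\text{id}_B$ sends $\rho \mapsto \Tr(\rho)$, so $I_c(\text{id}_B,\rho) = S(\rho)$---one has $Q^{(1)}(\text{id}_B) = \max_{\rho} S(\rho) = \log d_B$. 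Rearranging the displayed inequality gives $Q^{(1)}(\Xi'^n) \geq \log d_B\,(1 - 2n\ve) - (1 + n\ve)\,h\!\left(\frac{n\ve}{1 + n\ve}\right)$, as claimed.

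I do not expect a genuine obstacle: once \cref{thm: continuity-capacity} is in hand this is pure bookkeeping. The only two points deserving a sentence of care are (i) that monotonicity of the diamond norm under CPTP maps is insensitive to the (possibly infinite) dimension of the intermediate systems, so the data processing step goes through exactly as in the qubit case with only the finite output dimension $d_B$ surfacing in the continuity estimate; and (ii) that Shirokov's bound may be applied to the one-shot quantity $Q^{(1)}$ rather than the regularized capacity $Q$, which is precisely the content of \cref{rem: shir-Q1}. Everything else---the telescoping sum, the monotonicity of the correction term, and the evaluation $Q^{(1)}(\text{id}_B) = \log d_B$---is identical to, or an immediate analogue of, the argument already given for \cref{thm: continuity-capacity}.
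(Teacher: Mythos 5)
Your proposal is correct and is exactly the argument the paper intends: the corollary is derived by rerunning the proof of \cref{thm: continuity-capacity} with $\text{id}_2$ replaced by $\text{id}_B$ and $Q^{(1)}(\text{id}_B)=\log d_B$, which is precisely your telescoping-plus-Shirokov route. Your two explicit points of care (dimension-insensitivity of data processing, and monotonicity of the continuity bound in $\ve$) are sound and, if anything, slightly more careful than the paper's own treatment.
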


Now that we know how much entanglement we can generate over the network $(\Xi^n)_{n \in \mbb{N}}$ as a function of $n$ by performing error correction at each time step, we would like to know if there are any advantages coming from the structure of a single node $\Xi = \mc{D} \circ \mc{N} \circ \mc{E}$. It turns out that how well $\Xi^n$ can preserve information over time, or equivalently how much distance can a network with identical nodes generate entanglement over is simply a function of the spectrum of a single node. This is what the first claim in the following theorem quantifies. The second claim states the condition on the eigenvalues of $\Xi$ under which the network $\Xi^n$ rapidly loses any quantum data sent. The third claim states a condition on the eigenvalues of $\Xi$ and the time step $n$, such that the network $\Xi^n$ is able to preserve information until that time step. All these claims follow from studying the rate at which $\Xi^n$ converges to its limit point $\Xi_{\infty}$ (see \cref{def: comp-mixed-ch}) which describes its asymptotic image (\cite{szehr2015spectral}), a noisy channel that is completely mixing, in the sense that it bears no information about the input as it maps every input to the same output. In this result, we restrict to sending only a logical qubit over $\Xi^n$ due to the convenience in expressing the $\mathbf{T}-$matrix of qubit channels (see \cref{def: T-matrix} and \cref{lem: ruskai-Tmat-qubit-ch}) which is unknown for higher dimensions (\cite{ruskai2002analysis}). 

\begin{theorem} \label{thm: rad-conv}
    Let $\Xi: M_2 \to M_2$ be a qubit channel with $T_{\Xi}$ as in \cref{eq: Tseg}. Let $\lambda_i, t_i \in \left[0, 1\right) ~\forall i.$  Let $\mu$ be the spectral radius of $\Xi - \Xi_{\infty}$ and $R_n$ be the radius of convergence of $\Vert T_{\Xi}^n - T_{\Xi_{\infty}}^n \Vert$ as in \cref{eq: radius-of-conv}. Let $\ve, \delta \in \left(0, 1\right]$. Then,
    \begin{enumerate}
        \item \begin{equation}
                \mu = \max \{0, \lambda_1, \lambda_2, \lambda_3\}.
              \end{equation}
        \item \begin{equation}
                \lim \limits_{n \to \infty} R_n = 0, ~~~~~\text{if} ~\forall i ~~\lambda_i < 1.
              \end{equation} 
        \item For $\ve, \delta \leq 1$, if $\mu \geq 1 - \ve,$ then $R_n \geq 1 - \delta$ as long as $n \leq \frac{2 \delta}{\ve}.$ \\
    \end{enumerate}    
\end{theorem}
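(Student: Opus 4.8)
The plan is to prove the three claims in order, each being essentially a direct computation or estimate on the spectral data encoded in the $\mathbf{T}$-matrix.

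For claim (1), I would start from the normal form \eqref{eq: Tseg}. The eigenvalues of the lower $3\times 3$ block are $\lambda_1,\lambda_2,\lambda_3$, and the extra eigenvalue is $1$ coming from the fixed point (the leading $1$ in the first row/column), as is standard for trace-preserving maps. The operator $\Xi_\infty$ is the projection onto the unit-modulus part of the spectrum (\cref{def: comp-mixed-ch}); since here the $\lambda_i$ lie in $[0,1)$, the only unit-modulus eigenvalue is $1$, so $\Xi_\infty$ picks out exactly that piece. Subtracting $\Xi_\infty$ therefore removes the eigenvalue $1$ and leaves the spectrum $\{0,\lambda_1,\lambda_2,\lambda_3\}$ (the $0$ appearing because the subtraction kills the component along the fixed point). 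Hence the spectral radius is $\mu=\max\{0,\lambda_1,\lambda_2,\lambda_3\}$. The one subtlety to check carefully is that $\Xi-\Xi_\infty$ really has this spectrum and not something larger from non-normality — but since $\Xi_\infty$ is the spectral projector and $\Xi_\infty\Xi=\Xi\Xi_\infty=\Xi_\infty$, the difference acts as $\Xi$ on the complementary invariant subspace and as $0$ on the range of $\Xi_\infty$, pinning the spectrum down exactly.

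Claim (2) is then immediate: if every $\lambda_i<1$, then by claim (1) $\mu=\max\{0,\lambda_1,\lambda_2,\lambda_3\}<1$, so $\frac{1+\mu}{2}<1$, and therefore $R_n=\left(\frac{1+\mu}{2}\right)^n\to 0$ as $n\to\infty$ by \eqref{eq: radius-of-conv}. Claim (3) is also a short estimate: assuming $\mu\ge 1-\ve$, we get
\[
R_n=\left(\frac{1+\mu}{2}\right)^n\ge\left(\frac{2-\ve}{2}\right)^n=\left(1-\frac{\ve}{2}\right)^n\ge 1-\frac{n\ve}{2},
\]
using Bernoulli's inequality $(1-x)^n\ge 1-nx$ for $x\in[0,1]$. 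So $R_n\ge 1-\delta$ is guaranteed whenever $1-\frac{n\ve}{2}\ge 1-\delta$, i.e. $n\le\frac{2\delta}{\ve}$, which is the claimed bound.

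The main obstacle I anticipate is a bookkeeping one rather than a deep one: \cref{eq: radius-of-conv} is stated only "for sufficiently large $n$," whereas claim (3) asserts $R_n\ge 1-\delta$ for the potentially small range $n\le 2\delta/\ve$. I would need to either read $R_n$ strictly as the defined quantity $\left(\frac{1+\mu}{2}\right)^n$ (the notation introduced on the right-hand side of \eqref{eq: radius-of-conv}), in which case the estimate above is unconditional, or else argue that the "sufficiently large $n$" caveat is compatible with the regime $\ve,\delta$ small forcing $2\delta/\ve$ large. I expect the intended reading is the former — $R_n$ is by definition $\left(\frac{1+\mu}{2}\right)^n$ — so the Bernoulli estimate closes the argument cleanly; I would state this interpretation explicitly at the start of the proof of claim (3) to avoid circularity.
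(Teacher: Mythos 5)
Your proposal is correct and follows essentially the same route as the paper: identify the spectrum of $\Xi-\Xi_\infty$ as $\{0,\lambda_1,\lambda_2,\lambda_3\}$ for claim (1), then derive claims (2) and (3) from $R_n=\left(\frac{1+\mu}{2}\right)^n$ via the same Bernoulli-type estimate $\left(1-\frac{\ve}{2}\right)^n\geq 1-\frac{n\ve}{2}$. The only difference is cosmetic — you identify the spectrum abstractly via the spectral projector relations $\Xi\,\Xi_\infty=\Xi_\infty\Xi=\Xi_\infty$, whereas the paper computes the Jordan form of $T_\Xi$, the matrix $T_{\Xi_\infty}$, and the eigenvalues of $T_\Xi^n-T_{\Xi_\infty}^n$ explicitly before invoking Gelfand's formula — and your explicit resolution of the ``sufficiently large $n$'' caveat (reading $R_n$ as defined by the right-hand side of \cref{eq: radius-of-conv}) is exactly the reading the paper adopts implicitly in its proof of claim (3).
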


\begin{remark}
    Note that channels in \cref{def: segment-channel} are a special case of the above.
\end{remark}

\begin{proof}[Proof of \cref{thm: rad-conv}] 
    We can always write $T_{\Xi} = S J_{\Xi} S^{-1}$, where $J_{\Xi}$ is the Jordan canonical form of $T_{\Xi}$ and $S$ is the similarity transformation matrix. Computing $J_{\Xi}$ and $S$ we get,
    \begin{equation} \label{eq: JC-Xi}
        J_{\Xi} = \begin{pmatrix}
            1 & 0 & 0 & 0 \\
            0 & \lambda_1 & 0 & 0 \\
            0 & 0 & \lambda_2 & 0 \\
            0 & 0 & 0 & \lambda_3
        \end{pmatrix},   
    \end{equation}

    \begin{equation} \label{eq: S-Xi}
        S = \begin{pmatrix}
            \frac{1 - \lambda_3}{t_3} & 0 & 0 & 0 \\
            \frac{t_1 (\lambda_3 - 1)}{t_3 (\lambda_1 - 1)} & 1 & 0 & 0 \\
            \frac{t_2 (\lambda_3 - 1)}{t_3 (\lambda_2 - 1)} & 0 & 1 & 0 \\
            1 & 0 & 0 & 1
        \end{pmatrix}.    
    \end{equation}
    
    By \cref{def: comp-mixed-ch} and since $\lambda_i < 1, ~~\forall i$, the Jordan canonical form of $T_{\Xi_{\infty}}$ is
    \begin{equation} \label{eq: JC-Xi-infty}
        J_{\Xi_{\infty}} = \begin{pmatrix}
            1 & 0 & 0 & 0 \\
            0 & 0 & 0 & 0 \\
            0 & 0 & 0 & 0 \\
            0 & 0 & 0 & 0
        \end{pmatrix}.    
    \end{equation}
    
    Then, $T_{\Xi_{\infty}} = S J_{\Xi_{\infty}} S^{-1}$ is given by

    \begin{equation} \label{eq: T-Xi-infty}
        T_{\Xi_{\infty}} = \begin{pmatrix}
            1 & 0 & 0 & 0 \\
            \frac{t_1}{1 - \lambda_1} & 0 & 0 & 0 \\
            \frac{t_2}{1 - \lambda_2} & 0 & 0 & 0 \\
            \frac{t_3}{1 - \lambda_3} & 0 & 0 & 1
        \end{pmatrix}.    
    \end{equation}   

    Now we calculate $T_{\Xi}^n$ and $T_{\Xi_{\infty}}^n$ and define $\Delta_n \coloneqq T_{\Xi}^n - T_{\Xi_{\infty}}^n$. By \cref{eq: op-norm-T} in \cref{lem: wolf-spec-rad}, $\left\Vert \Delta_n \right\Vert = \left\Vert \left(T_{\Xi} - T_{\Xi_{\infty}}\right)^n \right\Vert$, hence calculating eigenvalues of $\Delta_n$ suffices to the find $\mu$, the spectral radius of $\Xi - \Xi_{\infty}$. Computing, we get that the eigenvalues of $\Delta_n$ are $\{0, \lambda_1^n, \lambda_2^n, \lambda_3^n\}$. Applying Gelfand's formula $\lim \limits_{n \to \infty} \left\Vert \Delta_n \right\Vert^{1/n} = \mu$, we get that $\mu = \max \{0, \lambda_1, \lambda_2, \lambda_3 \}.$

    Next, by \cref{eq: radius-of-conv} in \cref{lem: wolf-spec-rad}, and since $\lambda_i \geq 0$, it easily follows from the fact if $\forall i,~ \lambda_i < 1$, then $\mu \leq 1$, hence $ 0 \leq \lim \limits_{n \to \infty} R_n \leq \lim \limits_{n \to \infty}\left(\frac{1 + \mu}{2}\right)^n = 0$.
    
    Finally, let $\mu \geq 1 - \ve$, for $\ve \leq 1$. Then $R_n = \left(\frac{1 + 1 - \ve}{2}\right)^n = \left(1 - \frac{\ve}{2}\right)^n \geq 1 - \frac{n \ve}{2}.$ Let $\delta \leq 1$ such that $\delta \geq \frac{n \ve}{2}.$ Then $R_n \geq 1 - \delta$ whenever $n \leq \frac{2 \delta}{\ve}.$ This completes the proof.
    
\end{proof}

In the theory of quantum error correction \cite{knill1996theory}, the correctability of a code is restricted to the error set it can correct. Hence, different choices of a code $(\mc{E}, \mc{D})$ against noise $\mc{N}$ correct different errors, and possibly do not correct others not contained in the error set. Various constructions of quantum networks using error correction so designed thus have varied capabilities. However, it is a priori not clear if one might be able to sufficiently correct most of the error occurring during transmission over the network $(\Xi^n)_{n \in \mbb{N}}$ so as to preserve non trivial $Q^{(1)}(\Xi^n),$ until certain $n > 0.$ This is because it is not a property of the error set that a particular code can correct. To attempt to understand this, one must be able to keep track of the total error in transmission, and hence bound it as a function of the noise channel; thus independently of the chosen code. The following theorem achieves this. We bound the total error occurring at a single node $\Xi = \mc{D} \circ \mc{N} \circ \mc{E}$ in the sequence $\Xi^n$ as a function of the arbitrary noise model $\mc{N}$, relying only on the existence of a recovery operation correcting a certain number of errors (see \cref{def: recov}). Note that this is more general than choosing a specific code and assessing its performance against noise, as it quantifies the amount of error that can accumulate at each node $\Xi$ owing only to the noise model.

\begin{theorem} \label{thm: error-bound}
    Let $\Phi (\rho) = \sum \limits_{i = 1}^d M_i \rho M_i^{\dagger}$ be a noisy quantum channel with a recovery operation $\mc{R}$ that corrects errors $\{M_1, \ldots, M_k\}$ for some $k \leq d$ acting on an encoded state $\rho$. Then,
    \begin{equation} \label{eq: err-bd}
        \frac{1}{2}\left\Vert \left(\mc{R} \circ \Phi \right)(\rho) - \rho \right\Vert_1 \leq \left\Vert \sum \limits_{i = k+1}^d M_i^{\dagger}M_i \right\Vert.
    \end{equation}
\end{theorem}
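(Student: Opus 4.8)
The plan is to separate $\Phi$ into the part built from the correctable Kraus operators and the part built from the rest, and to show that the only error the recovery cannot undo is the \emph{weight} that $\rho$ places on the uncorrectable operators. Concretely, write $\Phi = \Phi_{\le k} + \Phi_{>k}$ with $\Phi_{\le k}(\rho) = \sum_{i=1}^k M_i\rho M_i^\dagger$ and $\Phi_{>k}(\rho) = \sum_{i=k+1}^d M_i\rho M_i^\dagger$; these are completely positive but only trace-non-increasing. Set $q \coloneqq \Tr\bigl(\Phi_{>k}(\rho)\bigr) = \Tr\bigl(\rho\sum_{i=k+1}^d M_i^\dagger M_i\bigr)$, so that $\Tr\bigl(\Phi_{\le k}(\rho)\bigr) = 1-q$ since $\sum_{i=1}^d M_i^\dagger M_i = I$.

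Next I would pin down the hypothesis that $\mc{R}$ corrects $\{M_1,\dots,M_k\}$ (in the sense of \cref{def: recov}) as the statement $(\mc{R}\circ\Phi_{\le k})(\rho) = (1-q)\,\rho$ for every codeword $\rho$. This is the Knill--Laflamme content: correctability of $\{M_1,\dots,M_k\}$ gives $P M_i^\dagger M_j P = \lambda_{ij}P$ for $i,j\le k$, where $P$ is the code projector; diagonalizing $(\lambda_{ij})$ by a unitary remixing of $M_1,\dots,M_k$ (which leaves $\Phi_{\le k}$ unchanged) yields a recovery with $(\mc{R}\circ\Phi_{\le k})(\rho) = \bigl(\sum_i\lambda_{ii}\bigr)\rho$, and $\sum_i\lambda_{ii} = \Tr\bigl(\rho\sum_{i=1}^kM_i^\dagger M_i\bigr) = 1-q$ because $P\rho P=\rho$. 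By linearity of $\mc{R}$,
\[ (\mc{R}\circ\Phi)(\rho) - \rho = (1-q)\rho + \mc{R}\bigl(\Phi_{>k}(\rho)\bigr) - \rho = \mc{R}\bigl(\Phi_{>k}(\rho)\bigr) - q\rho. \]

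Finally I would close with the triangle inequality and H\"older. Both $\mc{R}(\Phi_{>k}(\rho))$ and $q\rho$ are positive operators of trace $q$ (using that $\mc{R}$ is CPTP and $\Phi_{>k}(\rho)\ge 0$), so $\frac{1}{2}\bigl\Vert \mc{R}(\Phi_{>k}(\rho)) - q\rho\bigr\Vert_1 \le \frac{1}{2}(q+q) = q$; and $q = \Tr\bigl(\rho\sum_{i=k+1}^d M_i^\dagger M_i\bigr) \le \Vert\rho\Vert_1\,\bigl\Vert\sum_{i=k+1}^d M_i^\dagger M_i\bigr\Vert = \bigl\Vert\sum_{i=k+1}^d M_i^\dagger M_i\bigr\Vert$, which is \eqref{eq: err-bd}; the right-hand side is manifestly independent of $\rho$, so the bound in fact holds uniformly over codewords. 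The main obstacle is the second step: since $\Phi_{\le k}$ is not trace-preserving, ``recovery'' must be read up to the scalar weight $\Tr\Phi_{\le k}(\rho)$, and one has to verify that this normalization-sensitive statement is exactly what the Knill--Laflamme conditions for the sub-collection $\{M_1,\dots,M_k\}$ deliver (and that it is insensitive to which canonical form of those Kraus operators one picks, since the raw $M_i$ need not be in diagonal form). Everything after that is routine operator-norm bookkeeping.
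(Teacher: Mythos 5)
Your proposal is correct and follows essentially the same route as the paper: split $\Phi$ into its correctable and uncorrectable Kraus parts, use that the recovery returns $\rho$ scaled by the retained weight on the correctable part, and finish with the triangle inequality plus the operator-norm bound on $\Tr\bigl(\rho\sum_{i>k}M_i^{\dagger}M_i\bigr)$. Your extra care in deriving $(\mc{R}\circ\Phi_{\le k})(\rho)=(1-q)\rho$ from the Knill--Laflamme conditions (rather than asserting $\mc{R}(M_i\rho M_i^{\dagger})=\Tr(M_i\rho M_i^{\dagger})\rho$ term by term, which implicitly presumes a canonical Kraus form) is a minor but welcome refinement of the paper's argument.
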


\begin{proof}
    For $i \leq k, ~\mc{R} \left( M_i \rho M^{\dagger}_i\right) = \Tr\left(M_i \rho M_i^{\dagger}\right)\rho$ since $\mc{R}$ is a valid recovery operation for $\{M_1, \ldots M_k\} : k \leq d$. Then
    \begin{align*}
        \left(\mc{R} \circ \Phi\right) (\rho) &= \Tr \left(\sum \limits_{i = 1}^k M_i \rho M_i^{\dagger} \right) \rho + \mc{R} \left(\sum \limits_{i = k + 1}^d M_i \rho M_i^{\dagger} \right) \\
        &= \rho - \Tr \left(\sum \limits_{i = k + 1}^d M_i \rho M_i^{\dagger} \right) \rho + \mc{R} \left(\sum \limits_{i = k + 1}^d M_i \rho M_i^{\dagger} \right).
    \end{align*}
    
    Therefore,
    \begin{align*}
        \frac{1}{2} \left\Vert \mc{R} \circ \Phi \left(\rho\right) - \rho \right\Vert_1 &\leq \frac{1}{2}\left\Vert \Tr \left(\sum \limits_{i = k + 1}^d M_i \rho M_i^{\dagger}\right) \rho \right\Vert_1 + \left\Vert \mc{R} \left( \sum \limits_{i = k + 1}^d M_i \rho M_i^{\dagger} \right) \right\Vert_1 \\
        &= \frac{1}{2} \Tr \left( \sum \limits_{i = k + 1}^d M_i \rho M_i^{\dagger}\right) + \frac{1}{2} \Tr \left( \mc{R} \left( \sum \limits_{i = k+ 1}^d M_i \rho M_i^{\dagger} \right) \right) \\
        &= \Tr \left( \sum \limits_{i = k + 1}^d M_i \rho M_i^{\dagger} \right) \\
        &= \Tr \left( \sum \limits_{i = k + 1}^d M_i^{\dagger} M_i \rho \right) \\
        &\leq \left\Vert \sum \limits_{i = k + 1}^d M_i^{\dagger}M_i \right\Vert
    \end{align*}
    where the first line is triangle inequality, the second line is the $L_1$ norm for positive operators, the third line is the trace preserving property of channels, the fourth line is cyclicity of the trace, and the final line is by the definition of the operator norm. This gives the result.
\end{proof}

\section{Applications} \label{sec: app}

\subsection{Infinite dimensional pure-loss channel}

In this section, we consider the noise model of arbitrary excitation loss described by the infinite dimensional pure-loss channel (see \cref{def: loss-channel}) which is believed to be the dominant source of noise in quantum networks. We quantify the error bound of \cref{thm: error-bound} for this channel in \cref{cor: loss-err-bd}.

\begin{definition} \label{def: loss-channel}
    The bosonic pure-loss channel $\mc{M}_{\eta}: B(L^2(\mbb{R})) \to B(L^2(\mbb{R}))$ is an infinite dimensional quantum channel. Its action in Kraus form is $\mc{M}_{\eta}(\rho) = \sum \limits_{l = 0}^{\infty} A_l \rho A_l^{\dagger}$, where $A_l = \sqrt{\frac{(1 - \eta)^l}{l!}} \sqrt{\eta}^{a^{\dagger}a} a^l$. $a^{\dagger}, a$ are the bosonic creation and annihilation operators satisfying the commutation relation $\left[a, a^{\dagger}\right] = \mathds{1}$. $\eta \in \left[0, 1\right]$ is the transmission probability. It models the physical phenomenon of arbitrary excitation loss, and accurately describes attenuation in a linear quantum network modelled as $(\Xi^n)_{n \in \mbb{N}}$ in \cref{def: segment-channel}. \\
\end{definition}

\begin{corollary} \label{cor: loss-err-bd}
    Let $\ket{d}$ be the Fock state with the highest occupation number in $\mc{H}_A$ such that the action of the bosonic pure-loss channel on an encoded state is $\mc{M}_{\eta}^d (\rho) = \sum \limits_{l = 1}^{d} A_l \rho A_l^{\dagger}$ with there existing a recovery operation $\mc{R}$ correcting errors $\{A_1, \ldots, A_k\}$ for some $k \leq d$, where $A_l$ is as in \cref{def: loss-channel}. Then,
    \begin{equation}
        \frac{1}{2}\left\Vert \left(\mc{R} \circ \mc{M}_{\eta}^d \right)(\rho) - \rho \right\Vert_1 \leq \max \limits_{m = k + 1, \ldots, d} e^{-m D\left(\frac{k+1}{m} \big\Vert \eta\right)}
    \end{equation}  
    where $D (\cdot \big\Vert \cdot)$ is the Kullback-Liebler divergence (see \cref{sec: stat}).
\end{corollary}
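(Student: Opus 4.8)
The plan is to specialize \cref{thm: error-bound} to the noise channel $\mc{M}_\eta^d$ and then identify the resulting operator norm as a binomial tail probability, to which the Chernoff bound of \cref{sec: stat} applies. First I would apply \cref{thm: error-bound} with $\Phi = \mc{M}_\eta^d$ and Kraus operators $M_l = A_l$ (the $A_l$ of \cref{def: loss-channel}), so that $\tfrac12 \|(\mc{R}\circ\mc{M}_\eta^d)(\rho) - \rho\|_1 \le \| \sum_{l=k+1}^{d} A_l^\dagger A_l \|$, reducing everything to computing this operator norm.

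Second, I would evaluate $A_l^\dagger A_l$ in the Fock (number) basis. Since $A_l = \sqrt{(1-\eta)^l/l!}\,\sqrt{\eta}^{\,a^\dagger a} a^l$ and $\sqrt{\eta}^{\,a^\dagger a}$ is self-adjoint, $A_l^\dagger A_l = \tfrac{(1-\eta)^l}{l!}(a^\dagger)^l \eta^{\,a^\dagger a} a^l$, which is diagonal: using $a^l\ket{m} = \sqrt{m!/(m-l)!}\,\ket{m-l}$ (vanishing for $l>m$) and $\eta^{\,a^\dagger a}\ket{m-l}=\eta^{m-l}\ket{m-l}$ gives $A_l^\dagger A_l\ket{m} = \binom{m}{l}(1-\eta)^l\eta^{m-l}\ket{m}$, i.e.\ exactly the binomial mass $f(l;m,1-\eta)$ of \cref{sec: stat}. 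Hence $\sum_{l=k+1}^{d} A_l^\dagger A_l$ is diagonal in the Fock basis, and because an encoded state is supported on $\mathrm{span}\{\ket{0},\dots,\ket{d}\}$, its operator norm is the largest eigenvalue over $m\in\{0,\dots,d\}$; the terms with $l>m$ vanish, so the eigenvalue is $0$ for $m\le k$ and equals $\sum_{l=k+1}^{m}\binom{m}{l}(1-\eta)^l\eta^{m-l}$ for $k+1\le m\le d$.

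Third, I would recognize $\sum_{l=k+1}^{m}\binom{m}{l}(1-\eta)^l\eta^{m-l}$ as a binomial cumulative distribution function — either directly as the upper-tail probability $\mathrm{Pr}[X\ge k+1]$ for $X\sim\mathrm{Bin}(m,1-\eta)$, or, after relabelling by the complementary count of retained excitations $Y\sim\mathrm{Bin}(m,\eta)$, as $F(m-k-1;m,\eta)$ — and apply the Chernoff bound from \cref{sec: stat} to obtain an exponential bound of the form $e^{-mD(\cdot\Vert\cdot)}$. Taking the maximum of this bound over $m=k+1,\dots,d$ then gives the claimed inequality.

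The step I expect to be the main obstacle is this last one: the Chernoff bound as recalled in \cref{sec: stat} is the one-sided (lower-tail) version valid only when the evaluation point is at most $np$, so matching our sum to it requires care about which event is being bounded and, correspondingly, which Kullback--Leibler arguments ($\tfrac{k+1}{m}$ versus $\tfrac{m-k-1}{m}$, and $\eta$ versus $1-\eta$) and which validity regime appear in the final exponent. A secondary, essentially routine point is justifying that the operator norm over the infinite-dimensional Fock space reduces to a finite maximum over $m\le d$: this uses only that an encoded state lies in the span of the first $d+1$ Fock states, and if one wants a bound on all of $L^2(\mbb R)$ one additionally notes that $\sum_{l=k+1}^{d} f(l;m,1-\eta)\to 0$ as $m\to\infty$ for $\eta<1$, so the supremum is still attained at finite $m$.
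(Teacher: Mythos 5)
Your proposal follows essentially the same route as the paper's proof: apply \cref{thm: error-bound} with $M_l = A_l$, diagonalize $\sum_{l=k+1}^{d} A_l^\dagger A_l$ in the Fock basis to get the eigenvalue $\sum_{l=k+1}^{m}\binom{m}{l}(1-\eta)^l\eta^{m-l}$, take the maximum over $m \le d$, and invoke the Chernoff bound. The obstacle you flag in the last step is genuine and is not resolved in the paper either: the paper simply cites ``the Chernoff tail bound'' to pass from the binomial tail to $e^{-mD(\frac{k+1}{m}\Vert\eta)}$, whereas a careful match to the lower-tail bound as stated in \cref{sec: stat} (applied to $Y = m - X \sim \mathrm{Bin}(m,\eta)$) yields $e^{-mD(\frac{m-k-1}{m}\Vert\eta)} = e^{-mD(\frac{k+1}{m}\Vert 1-\eta)}$, valid only when $k+1 \ge m(1-\eta)$ --- so the arguments of the divergence in the stated corollary appear to be swapped, and the validity condition restricting the range of $m$ is left implicit. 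Your version of the argument is therefore the more careful one; apart from pinning down that exponent and regime, nothing further is needed.
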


\begin{proof}
    By \cref{thm: error-bound} $$\frac{1}{2}\left\Vert \left(\mc{R} \circ \mc{M}_{\eta}^d \right)(\rho) - \rho \right\Vert_1 \leq \left\Vert \sum_{l = k + 1}^d A_l^{\dagger} A_l \right\Vert.$$
    Now, using \cref{def: loss-channel}, we can write $E \coloneqq \sum \limits_{l = k + 1}^d A_l^{\dagger} A_l = \sum \limits_{l = k + 1}^d \frac{\left(1 - \eta\right)^l}{l!} {a^{\dagger}}^{l} \eta^{a^{\dagger} a} a^{l}$. 
    Then, 
    \begin{align*}
        E \ket{m} &= \sum \limits_{l = k + 1}^d \frac{\left(1 - \eta\right)^l}{l!} \frac{m!}{\left(m - l\right)!} \eta^{m - l} \ket{m} \\
        &= \sum \limits_{l = k + 1}^m \binom{m}{l} \eta^{m - l} \left( 1 - \eta\right)^l \ket{m}.
    \end{align*}
    Therefore,
    \begin{align*}
        \left\Vert E \right\Vert &= \max \limits_{m = k+1, \ldots, d} \sum \limits_{l = k + 1}^m \binom{m}{l} \eta^{m - l} \left( 1 - \eta\right)^l \\
        &\leq \max \limits_{m = k + 1, \ldots, d} e^{-m D\left(\frac{k+1}{m} \big\Vert \eta\right)}
    \end{align*}
    where the last inequality is the Chernoff tail bound for the binomial distribution (see \cref{sec: stat}). This gives the result.
\end{proof}

\subsection{Amplitude damping channel and its generalisations}

In this section, we apply \cref{thm: continuity-capacity,thm: rad-conv,thm: error-bound} to the amplitude damping channel and its bosonic counterpart to exemplify the non triviality of the capacity bounds obtained on the sequence of channels $(\Xi^n)_{n \in \mbb{N}}$ as a function of the time step parameter $n$.

The qubit amplitude damping channel denoted by $\mc{A}_{\gamma}: M_2 \to M_2$ where $\gamma \in \left[0, 1\right]$ is the damping parameter or the probability of losing one excitation. This is a non unital and extremal channel \cite{ruskai2002analysis}, and physically models the loss of one excitation to the environment from the system. The Kraus representation of $\mc{A}_{\gamma} (\rho) = A_0 \rho A_0^{\dag} + A_1 \rho A_1^{\dag}$ is  
$A_0 = \begin{pmatrix}
       1 & 0 \\
       0 & \sqrt{1 - \gamma} \\
       \end{pmatrix}$
and $A_1 = \begin{pmatrix}
           0 & \sqrt{\gamma} \\
           0 & 0
           \end{pmatrix}.$
For $\gamma \in \left[0, \frac{1}{2}\right]$, the channel $\mc{A}_{\gamma}$ is \textit{degradable}, i.e., $\exists~$ a CPTP map $\mc{D}$ such that $\mc{D} \circ \mc{A}_{\gamma} = \mc{A}_{\gamma}^{c}$, where $\mc{A}_{\gamma}^{c}$ is complementary channel of $\mc{A}_{\gamma}.$ Hence the quantum capacity of $Q(\mc{A}_{\gamma})$ is equal to its coherent information \cite{cubitt2008structure}. For $\gamma \in \left[\frac{1}{2}, 1 \right]$, $\mc{A}_{\gamma}$ is \textit{anti-degradable}, i.e., $\exists~$ a CPTP map $\mc{L}$ such that $\mc{L} \circ \mc{A}^{c}_{\gamma} = \mc{A}_{\gamma}.$ In the regime where $\mc{A}_{\gamma}$ is anti-degradable, the quantum capacity $Q(\mc{A}_{\gamma}) = 0.$ From knowing the action of $\mc{A}_{\gamma}$ in its Kraus form, $T_{\mc{A}_{\gamma}}$ can be easily computed using \cref{def: T-matrix} as follows

\begin{equation} \label{eq: Tmat of amp-damp}
    T_{\mc{A}_{\gamma}} = \begin{pmatrix}
                          . 1 & 0 & 0 & 0 \\
                            0 & \sqrt{1 - \gamma} & 0 & 0 \\
                            0 & 0 & \sqrt{1 - \gamma} & 0 \\
                            \gamma & 0 & 0 & 1 - \gamma
                          \end{pmatrix}.
\end{equation}

Since we are interested in achieving successful entanglement distribution across a communication link $(\Xi^n)_{n \in \mbb{N}}$, we shall be concerned with the damping regime where $\mc{A}_{\gamma}$ is degradable so that the quantum capacity and hence the coherent information is positive. Therefore, we consider $\gamma \in \left[0, \frac{1}{2}\right].$ From \cref{thm: rad-conv}, the spectral radius of $\mc{A}_{\gamma} - \mc{A}_{{\gamma}_{\infty}}$ is $\mu = \sqrt{1 - \gamma},$ and therefore $\mu \in \left[\frac{1}{\sqrt{2}}, 1\right].$ Let there be some code $\left(\mc{E}, \mc{D}\right)$ such that $\frac{1}{2}\left\Vert \mc{D}\circ\mc{A}_{\gamma}\circ \mc{E} - \text{id}_2 \right\Vert_{\diamond} \leq \ve = 0.0005.$ For ease of notation, let $\mc{D}\circ\mc{A}_{\gamma}\circ \mc{E} = \Xi_{\mc{A}}$. Let us consider an arbitrarily chosen time step $n = 44$. Then by applying \cref{thm: continuity-capacity} we have that $Q^{(1)}(\Xi_{\mc{A}}^{44}) \geq 0.803.$ Again, from \cref{thm: rad-conv}, given $\ve = 0.0005$, $n = 44$ we find that $\delta \geq 0.011$ and hence $R_n \geq 0.989$ whenever $n \leq 44.$ Thus, we see a good agreement in the lower and upper bounds on the rate at which information is dissipated through $(\Xi_{\mc{A}}^n)_{n \in \mbb{N}}$ by applying the continuity bound on the one-shot capacity in \cref{thm: continuity-capacity} and the bound on the radius of convergence of the channel to its limit point in \cref{thm: rad-conv}. Moreover, we are able to estimate analytically the throughput of the communication link $(\Xi_{\mc{A}}^n)_{n \in \mbb{N}}$ undergoing amplitude damping noise at an arbitrarily chosen finite time step $n$. This signifies the efficacy of the bound on the one-shot capacity which is indeed non trivial, and together with the upper bound via the radius of convergence provides a reasonable estimate on how much information is transmitted as a function of $n.$

Next, consider the bosonic amplitude damping channel with an input Hilbert space that is infinite dimensional with a basis $\{ \ket{0}, \ket{1}, \ket{2}, \ket{3}, \ldots \}$ where $\ket{j}$ denotes a state with $j$ excitations. The bosonic generalisation of the amplitude damping channel is $\mc{B}_{\gamma}(\rho) = \sum \limits_{k=0}^{\infty} B_k \rho B_k^{\dag}$ where $B_k = \sum \limits_{j \geq k} \sqrt{\binom{j}{k}} \sqrt{{(1 - \gamma)}^{j - k} \gamma^k} \ket{j - k}\bra{j}$ represents the loss of $k$ excitations from the system. In particular, the first two Kraus operators are $B_0 = \sum \limits_{j} (1 - \gamma)^{\frac{j}{2}} \ket{j}\bra{j},$ and $B_1 = \sum \limits_{j \geq 1} \sqrt{j (1 - \gamma)^{j - 1} \gamma} \ket{j - 1}\bra{j}.$ The existence of a simple bosonic quantum error correcting code, the first of its kind, has long been known \cite{chuang1997bosonic} with codewords $\ket{0}_L = \frac{1}{\sqrt{2}}\left(\ket{40} + \ket{04}\right)$, $\ket{1}_L = \ket{22}$ exactly correcting the error set $\mc{F} = \{B_0 \otimes B_0, B_0 \otimes B_1, B_1 \otimes B_0\}.$ We want to quantify via \cref{thm: error-bound} the total error accumulating at the communication node by calculating the RHS of \cref{eq: err-bd} as $\left\Vert I - (B_0 \otimes B_0)^2 - B_0^2 \otimes B_1^{\dag}B_1 - B_1^{\dag}B_1 \otimes B_0^2 \right\Vert$ since $B_0$ is hermitian. We restrict the input Hilbert space to $j = 4$ excitations as that is the highest occupation number of a physical state in the codespace. 

For completeness, we explicitly show the calculation of $$\left\Vert I - (B_0 \otimes B_0)^2 - B_0^2 \otimes B_1^{\dag}B_1 - B_1^{\dag}B_1 \otimes B_0^2 \right\Vert.$$ Setting $j = 4$, we get $$B_0 = \ketbra{0} + \sqrt{1 - \gamma} \ketbra{1} + (1 - \gamma) \ketbra{2} + (1 - \gamma)^{3/2}\ketbra{3} + (1 - \gamma)^2 \ketbra{4}.$$ Since $B_0$ is hermitian, we can calculate $B_0^2$ and write it in the following compact form $$B_0^2 = \sum \limits_{m = 0}^4 (1 - \gamma)^m \ketbra{m}. $$ Then $$ B_0^2 \otimes B_0^2 = \sum \limits_{m, n = 0}^4 (1 - \gamma)^{m + n} \ketbra{mn}. $$ Next we compute $$B_1 = \sqrt{\gamma} \ketbra{0}{1} + \sqrt{2 (1 - \gamma) \gamma} \ketbra{1}{2} + \sqrt{3 \gamma (1 - \gamma)^2} \ketbra{2}{3} + \sqrt{4 (1 - \gamma)^3 \gamma} \ketbra{3}{4}. $$ Note that $B_1$ is not hermitian, so we need to calculate $B_1^{\dag}B_1$ explicitly. We do this and express in the following compact form $$B_1^{\dag}B_1 = \sum \limits_{m = 1}^4 m \gamma (1 - \gamma)^{m - 1} \ketbra{m} .$$ Now we can write down the remaining two terms we need as follows $$B_0^2 \otimes B_1^{\dag}B_1 = \sum \limits_{m, n = 0}^4 n \gamma (1 - \gamma)^{m + n - 1} \ketbra{m n}, $$ and $$ B_1^{\dag} B_1 \otimes B_0^2 = \sum \limits_{m, n = 0}^4 m \gamma (1 - \gamma)^{m + n - 1} \ketbra{m n}. $$ Using these expressions, the operator norm that we need to evaluate takes the form
\begin{align}
    &\left\Vert I - \sum \limits_{m, n =0}^4 (1 - \gamma)^{m + n} \left[1 + (m + n)\frac{\gamma}{1 - \gamma}\right] \ketbra{m n} \right\Vert \\
    &= \max \limits_{m, n = 0, \ldots, 4} 1 - (1 - \gamma)^{m + n} \left[ 1 + (m + n)\frac{\gamma}{1 - \gamma}\right] \\
    &= \max \limits_{p = 0, \ldots, 8} 1 - (1 - \gamma)^p \left[1 + \frac{p \gamma}{1 - \gamma}\right] \label{eq: maximise},
\end{align}

where we put $m + n = p.$ 

Now we can further simplify starting with
\begin{align}
    &1 - (1 - \gamma)^p \left[1 + \frac{p \gamma}{1 - \gamma} \right] \\
    &= 1 - (1 - \gamma)^{p - 1} \left[ 1 - \gamma + p \gamma \right] \\
    &= 1 - (1 - \gamma)^{p - 1} \left[1 + (p - 1) \gamma \right] \\
    &\leq 1 - (1 - (p - 1) \gamma)(1 + (p - 1) \gamma) \label{eq: binom} \\
    &= 1 - (1 - (p - 1)^2 \gamma^2 \\
    &= (p - 1)^2 \gamma^2 \label{eq: simplified},
\end{align}

where in \cref{eq: binom} we used that $(1 - \gamma)^{p - 1} \geq 1 - (p - 1) \gamma.$ 

Putting \cref{eq: simplified} into \cref{eq: maximise}, we get

\begin{align}
    &\left\Vert I - (B_0 \otimes B_0)^2 - B_0^2 \otimes B_1^{\dag}B_1 - B_1^{\dag}B_1 \otimes B_0^2 \right\Vert \\
    &\leq \max \limits_{p = 0, \ldots, 8} (p - 1)^2 \gamma^2 \\
    &= 49 \gamma^2 \label{eq: final-err-bd-amp-damp}, 
\end{align}

where we used that $(p - 1)^2 \gamma^2$ is a monotonically increasing function. 

We can now put the bound in \cref{eq: final-err-bd-amp-damp} into the continuity bound on the one-shot capacity in \cref{thm: continuity-capacity} with $\ve = 49 \gamma^2$ to get

\begin{equation} \label{eq: final-cpcty}
    Q^{(1)} (\Xi_{\mc{A}_{\gamma}}^n) \geq 1 - 98 n \gamma^2 - (1 + 49 n \gamma^2) h \left(\frac{49 n \gamma^2}{1 + 49 n \gamma^2}\right),
\end{equation}

where $\Xi_{\mc{A}_{\gamma}} = \mc{D} \circ \mc{A}_{\gamma} \circ \mc{E}$ and $(\mc{E}, \mc{D})$ corresponds to an encoder-decoder pair for the quantum error correcting code $\mc{C} = \spn\{\ket{0}_L, \ket{1}_L\}$ with codewords $\ket{0}_L = \frac{1}{\sqrt{2}}\left(\ket{40} + \ket{04}\right)$, $\ket{1}_L = \ket{22}$. To explicitly calculate $(\mc{E}, \mc{D})$ for this code, one may use the definitions in \cref{sec: qecc} or see \cite{chuang1997bosonic}. \\

Thus, we see for the amplitude damping noise model, how one can quantify, using our error and capacity bounds in \cref{thm: error-bound,thm: continuity-capacity}, the error accumulating at each node in the communication link $\Xi^n$ beyond what Knill-Laflamme theory \cite{knill1996theory} predicts; as a function of $n$, and the damping parameter $\gamma$ originating from the chosen noise model. We also compute the non trivial lower bound on the one-shot quantum capacity under finite error $49 \gamma^2$ in the amplitude damping noise model via \cref{eq: final-cpcty}.

\section{Conclusion} \label{sec: conc}

In summary, we study the problem of lower bounding the one-shot quantum capacity of an $n-$fold composition of channels $(\Xi^n)_{n \in \mbb{N}}$ in the sequential setting at short time scales. We quantify a lower bound that is non trivially a function of the time step $n$, thus showing that it is possible to preserve information over $\Xi^n$ up to certain finite time $n$ by performing quantum error correction at each time step. We quantify in simple terms how well this can be done by relating the rate of transmission to the spectrum of $\Xi$, and characterise the maximum possible error occurring as a function of the noise model. Our results suggest that the sequence of channels $\Xi^n$ is a useful model for a quantum network, analytically providing insights into its structure, and in turn determining its capabilities for distributing entanglement between spatially separated points. By applying our analysis to the infinite dimensional pure loss channel, as well as the amplitude damping channel and its bosonic counterpart, we characterise a relevant noise scenario for realistic networks.

\paragraph{Future directions} It would be interesting to consider inequivalent error correction mechanisms at each time step, and study the rate of transmission. In that case, the current model needs to be altered as it only captures properties of a network with identical nodes. Moreover, it is worth finding out if sharper bounds in the sequential setting can be obtained from a \textit{continuous-time} analysis (\cite{muller2014quantum}) as opposed to discrete Quantum Markov Semigroups considered so far. Further, one may want to extend our error bound at a single node to the entire sequence of channels which may require new analytical techniques. Among applications, other noise models for networks may be studied such as the bosonic dephasing channel (\cite{lami2023exact}) independently or in combination with pure-loss (\cite{mele2024quantum}), wherein the quantum capacity of the former was recently solved while that of the latter is still unknown.

\paragraph{Acknowledgements} I am grateful to Paula Belzig, Eric Culf, Sukanya Ghosal, Ray Laflamme, Debbie Leung, Graeme Smith, and Peixue Wu for helpful discussions; and Eric Culf for helpful comments on the manuscript. I am grateful to Anne Broadbent for invaluable support. 

Research at Perimeter Institute is supported in part by the Government of Canada through the Department of Innovation, Science, and Economic Development Canada and by the Province of Ontario through the Ministry of Colleges and Universities. The author was affiliated to Institute for Quantum Computing, University of Waterloo for part of this work, and acknowledges Norbert Lütkenhaus for funding and discussions during that time.

\bibliographystyle{bibtex/bst/alphaarxiv.bst}
\bibliography{bibtex/bib/full.bib,bibtex/bib/quantum.bib,bibtex/quantum_new.bib}

\end{document}